\DeclareMathAlphabet{\mathpzc}{OT1}{pzc}{m}{it}
\newtheorem{theorem}{\textbf{\textsc{Theorem}}}
\begin{document}

\title{Defeating Eavesdroppers with Ambient Backscatter Communications}

\author{ Nguyen Van Huynh$^1$, Nguyen Quang Hieu$^2$, Nam H. Chu$^2$, Diep N. Nguyen$^2$,\\
	Dinh Thai Hoang$^2$, and Eryk Dutkiewicz$^2$\\
	$^1$ Department of Electrical and Electronic Engineering, Imperial College London, UK\\
	$^2$ School of Electrical and Data Engineering, University of Technology Sydney, Australia\\
	Emails: huynh.nguyen@imperial.ac.uk, \{hieu.nguyen-1, namhoai.chu\}@student.uts.edu.au,\\
	 \{diep.nguyen, hoang.dinh, eryk.dutkiewicz\}@uts.edu.au
}

\maketitle

\begin{abstract}
Unlike conventional anti-eavesdropping methods that always require additional energy or computing resources (e.g., in friendly jamming and cryptography-based solutions), this work proposes a novel anti-eavesdropping solution that comes with mostly no extra power nor computing resource requirement. This is achieved by leveraging the ambient backscatter technology in which secret information can be transmitted by backscattering it over ambient radio signals. Specifically, the original message at the transmitter is first encoded into two parts: (i) active transmit message and (ii) backscatter message. The active transmit message is then transmitted by using the conventional wireless transmission method while the backscatter message is transmitted by backscattering it on the active transmit signals via an ambient backscatter tag. As the backscatter tag does not generate any active RF signals, it is intractable for the eavesdropper to detect the backscatter message. Therefore, secret information, e.g., a secret key for decryption, can be carried by the backscattered message, making the adversary unable to decode the original message. Simulation results demonstrate that our proposed solution can significantly enhance security protection for communication systems.

\end{abstract}

\begin{IEEEkeywords}
Eavesdropper, green communications, ambient backscatter communications, signal detection, and physical layer security.
\end{IEEEkeywords}


\section{Introduction}
\label{Sec:intro}

\IEEEPARstart{T}{he} provisioning of security and privacy has been emerging as a critical issue in wireless communications systems due to the broadcast nature of the wireless medium. Among security threats, eavesdropping attacks are often considered as the most popular threats in wireless communication systems, especially in IoT networks. In particular, to perform eavesdropping attacks, an eavesdropper is usually placed near the target system to ``wiretap'' the legitimate channel and obtain the information sent from the transmitter. As the eavesdropper works in a passive manner, it is very challenging for the legitimate system to detect and prevent such eavesdropping attacks.

Anti-eavesdropping has been well investigated in the literature, e.g.,~\cite{Soltani2018Covert}-\cite{Zhang2016On}. The most common approach is to rely on ``friendly jamming'' in which interference is deliberately injected into the channel to disrupt the signal reception at potential eavesdroppers, e.g.,~\cite{Soltani2018Covert, Siyari2017Frinedly, Mobini2019Wireless}. However, friendly jamming cannot always guarantee a positive secrecy rate, defined as the difference between the channel capacity between the transmitter and the legitimate receiver and that between the transmitter and the eavesdropper. Moreover, generating artificial noise may degrade legitimate signal reception at nearby legitimate devices, especially in dense wireless settings which are very common in future wireless networks. Recently, cooperative transmission by using relays has been emerging as a promising technique to improve the physical-layer security of wireless communications under the presence of eavesdroppers~\cite{Yang2017Optimal}. These relays can also generate jamming signals (e.g., noise) to obfuscate eavesdroppers in the ranges of relays. Nevertheless, this approach usually requires eavesdropping channel state information in advance to achieve good protection performance~\cite{Yang2017Optimal}. In practice, eavesdropping channel state information is usually unavailable or difficult to accurately estimate due to the passive nature of eavesdroppers. In addition, additional relays also come at the cost of higher complexity. In practice, a more popular and acceptable solution to deal with eavesdroppers is to encrypt the information at the application and transportation layers~\cite{Hu2018Covert}. However, the encrypted data can be decrypted if the eavesdropper has sufficient computational capacities. In addition, distributing and managing cryptographic keys are challenging, especially in decentralized systems with a massive number of devices and mobility~\cite{Lu2020Intelligent}. Moreover, for power-constrained devices such as IoT transceivers (e.g., smart meters) it is difficult, if not infeasible, to effectively run computation-demanding cryptographic functions~\cite{Zhang2016On}. Note that all aforementioned anti-eavesdropping methods always require significant additional energy or computing resources (e.g., in friendly jamming and cryptographic solutions).

Given the above, we propose a novel anti-eavesdropping solution that comes at mostly no extra power nor computing resource cost. This is achieved by augmenting legitimate transmitters with an ambient backscatter tag that can backscatter information bits onto ambient radio signals \cite{ABC}. Specifically, the original message is first encoded into two parts: (i) active transmit message and (ii) backscatter message. The active transmit message is sent to the receiver using conventional (active) wireless signals from the transmitter. At the same time, the backscatter message (the second part of the original message) is backscattered to the receiver by using an ambient backscatter tag. Note that the ambient backscatter message is transmitted at the same time and on the same frequency as the active transmit message, yet at no extra transmission power, hence can be considered as pseudo noise in the background~\cite{Huynh2018Survey, Liu2013Ambient}. For that, it is mostly impossible for the eavesdropper to discern/eavesdrop the ambient message. Hence, the backscatter message is used to carry secret information, e.g., a secret key for decryption. In this case, even if the eavesdropper can obtain the information from the active signals, it still cannot decode the original message (due to missing the important information sent over the backscatter signals). Simulation results show that our proposed approach can effectively defeat the eavesdropper. More importantly, the ambient backscatter tag can operate without requiring any power supply~\cite{Huynh2018Survey}. As such, our proposed solution can enable green and secure communications for many applications in future wireless communication networks. For example, a smart meter can be equipped with the proposed backscatter tag to transmit secret messages (e.g., key) over the ambient backscatter channel.

\section{System Model}
\label{Sec.System}
\begin{figure}[h]
	\centering
	\includegraphics[width=0.95\linewidth]{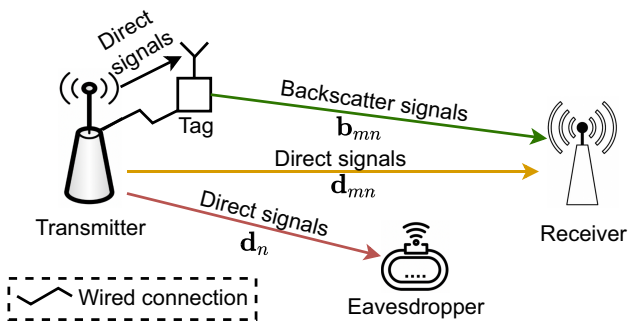}
	\caption{System model.}
	\label{Fig.system_model}
\end{figure}

We consider a wireless network consisting of a transmitter that is aided with a nearby ambient backscatter tag and a legitimate receiver with the presence of an eavesdropper, as illustrated in Fig.~\ref{Fig.system_model}. The ambient backscatter tag is equipped with a backscatter circuit and connected to the transmitter through a wired channel. When the transmitter sends a message to the receiver, it will first encode this message into two parts as illustrated in Fig.~\ref{Fig.splitmessage}. The first part will be transmitted to the receiver over the conventional communication channel ($d_\mathrm{mn}$) based on the active RF component of the transmitter. The second part will be simultaneously transmitted to the receiver through the backscatter tag. It is important to note that, when the transmitter transmits signals over the conventional communication channel, the tag will backscatter such signals and transmit the data for the second part to the receiver. As a result, at the same time, we can transmit two data streams (one is over the conventional channel and another one is over a hidden channel, i.e., a backscatter channel) to the receiver. It is also worth mentioning that the backscatter rate is lower than the active transmission rate. Thus, the size of the second part is usually smaller than the first part. As the backscatter tag does not generate any active signals, it is intractable for the eavesdropper to detect the backscattered signals. As a result, the eavesdropper may not be able to derive the original message, thus the deception strategy for communication between the transmitter and the receiver can be guaranteed.

Details of splitting the original message are illustrated in Fig.~\ref{Fig.splitmessage}. In particular, we randomly take a number of bits from the original message to construct the backscatter message with a step of $K$ symbols. The rest of the original message is conveyed to the receiver by the transmitter through active transmissions. The size of the active transmit message is usually larger than that of the backscatter message. In this paper, we assume that the backscatter frame has the size of $I$ bits in which $P$ bits are reserved for pilot signals and $S$ bits are used for dividing information (i.e., the first bit's ID and the step $K$) with $(P+S) < I$. In this way, we can significantly improve the security level of the system as the eavesdropper cannot derive the backscatter message as well as the divided information.

\begin{figure}[h]
	\centering
	\includegraphics[width=0.95\linewidth]{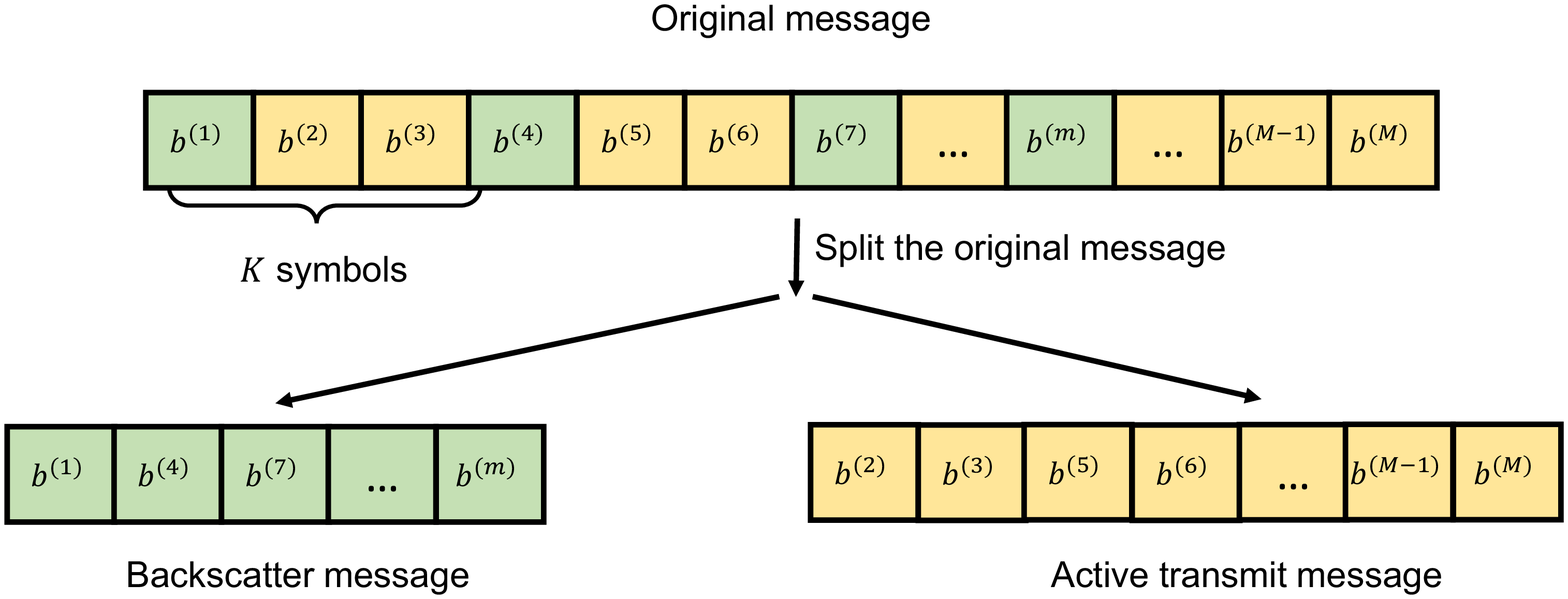}
	\caption{Dividing the original message.}
	\label{Fig.splitmessage}
\end{figure}

\section{Channel Model}
\label{Sec.signalmodel}

Practically, having the backscatter rate lower than the sampling rate of the transmitter's signals will ensure the receiver properly decodes the backscattered signals~\cite{Huynh2018Survey, Liu2013Ambient, Huynh2022Defeating}. Thus, in this paper, we assume that the sampling rate of the transmitter's signals is $N$ times higher than the backscatter rate. To do that, the ambient backscatter tag will backscatter each information bit over $N$ transmitter symbols. We then denote $y_{mn}$ as the $n$-th signal received at the $m$-th antenna of the receiver. In our system, $y_{mn}$ includes the signals sent from the transmitter, the backscattered signals from the ambient backscatter tag, and noise. As such, we have
\begin{equation}
	\label{eq:totalreceive}
	y_{mn} = \underbrace{d_{mn}}_{\text{direct link}} + \underbrace{b_{mn}}_{\text{backscatter link}} + \sigma_{mn},
\end{equation}
where $d_{mn}$ is the transmitter's direct link signals, $b_{mn}$ presents the backscattered signals, and $\sigma_{mn}$ denotes the circularly symmetric complex Gaussian (CSCG) noise with unit variance and zero mean (i.e., $\sigma_{mn} \sim \mathcal{CN}(0,1)$).

\subsection{Direct Link}

We denote $s_{tn}$ as the transmitter's signals at time instant $n$. It is worth noting that the tag considers the transmitter's signals as ambient signals and backscatters these signals to convey information to the receiver. Thus, we assume that $s_{tn}$ is random and unknown at the backscatter tag and follows the standard CSCG distribution with zero mean and unit variance (i.e., $s_{tn} \sim \mathcal{CN}(0,1)$). The signals from the transmitter received at the $m$-th antenna of the receiver can be expressed as follows:

\begin{equation}
	d_{mn} = f_{rm} \sqrt{P_{tr}}s_{tn}, 
\end{equation}
where $f_{rm} $ is the Rayleigh fading with $\mathbb{E}[|f_{rm} |^2] = 1$~\cite{Huynh2022Defeating}. $P_{tr}$ denotes the average received powered from the transmitter. $P_{tr}$ can be expressed as
\begin{equation}
	P_{tr} = \frac{ \kappa P_t G_t G_r}{ {L_r}^{\upsilon}},
\end{equation}
where $\kappa = {(\frac{\lambda}{4\pi})}^2$ with wavelength $\lambda$. $P_t$ denotes the transmitter's transmit power. $G_t$ and $G_r$ are the antenna gains of the transmitter and the receiver, respectively. $L_r$ is the transmitter-to-receiver distance. $\upsilon$ is the path loss exponent.

\subsection{Backscatter Link}
The transmitter's signals received at the backscatter tag can be expressed as follows:

\begin{equation}
	c_{n} = g_r \sqrt{P_{b}}s_{tn},
\end{equation}
where $g_r$ denotes the Rayleigh fading from the transmitter to the tag with $\mathbb{E}[|g_r|^2] = 1$~\cite{Huynh2022Defeating}. $P_{b}$ is the average power from the transmitter received at the tag. We have
\begin{equation}
	P_{b} = \frac{\kappa P_t G_t G_b}{{L_b}^{\upsilon}},
\end{equation}
where $G_b$ is the antenna gain at the tag and $L_b$ is the transmitter-to-tag distance. The tag backscatters information to the receiver by reflecting or absorbing the transmitter's signals. We denote the reflecting state as $e=1$ and the absorbing state as $e=0$. As mentioned, each information bit will be backscattered over $N$ transmitter symbols. Thus, state $e$ remains unchanged during this period. We then can express the backscattered signals as follows:

\begin{equation}
	s_{b,n} = \gamma c_n e,
\end{equation}
where $\gamma$ is the reflection coefficient. The backscattered signals received at the $m$-th antenna of the receiver can be expressed as follows:

\begin{equation}
	\label{eq:bn}
	\begin{aligned}
		b_{mn} &= f_{bm} \sqrt{\frac{G_b G_r \kappa}{{L_e}^{\upsilon}}} \gamma e \Big(g_r \sqrt{P_{br}}s_{tn}\Big)\\
		&= f_{bm} e \Big(g_r \sqrt{\frac{\kappa |\gamma|^2 P_{tr}{G_b}^2{L_r}^{\upsilon}}{{L_b}^{\upsilon}{L_e}^{\upsilon}}} s_{tn}\Big),
	\end{aligned}
\end{equation}
where $L_e$ is the tag-to-receiver distance and $f_{bm}$ is the Rayleigh fading of the tag-to-receiver link with $\mathbb{E}[|f_{bm}|^2] = 1$. Denote $\tilde{\alpha}_r = \frac{\kappa {|\gamma|^2}{G_b}^2{L_r}^{\upsilon}} {{L_b}^{\upsilon} {L_e}^{\upsilon}}$, we can rewrite (\ref{eq:bn}) as

\begin{equation}
	b_{mn} = f_{bm} e \Big(g_r \sqrt{\tilde{\alpha}_r P_{tr}}s_{tn} \Big).
\end{equation}

\subsection{Received Signals}
\label{sec:received}
Given the above, the received signals at the $m$-th antenna then can be expressed as follows:
\begin{equation}
	y_{mn} = f_{rm} \sqrt{P_{tr}}s_{tn} + f_{bm} e \Big(g_r \sqrt{\tilde{\alpha}_r P_{tr}}s_{tn} \Big) + \sigma_{mn}.
\end{equation}
Denote $\alpha_{dt} \triangleq P_{tr} $ as the signal-to-noise ratio (SNR) of the transmitter-to-receiver link and $\alpha_{bt} \triangleq \tilde{\alpha}_r P_{tr}$ as the SNR of the backscatter link (i.e., transmitter-tag-receiver link), we have
\begin{equation}
	y_{mn} = \underbrace{f_{rm} \sqrt{\alpha_{dt}}s_{tn}}_{\text{direct link}} + \underbrace{f_{bm}  e \Big(g_r \sqrt{\alpha_{bt}}s_{tn} \Big)}_{\text{backscatter link}} + \sigma_{mn}.
\end{equation}
Denote
\begin{equation}
	\begin{aligned}
	&\mathbf{f}_{r} = [f_{r1}, \ldots, f_{rm}, \ldots, f_{rM}]^\mathsf{T},\\
	&\mathbf{f}_{b} = [f_{b1}, \ldots, f_{bm}, \ldots, f_{bM}]^\mathsf{T},\\
	&\bm{\sigma}_{n} = [\sigma_{1n}, \ldots, \sigma_{mn}, \ldots, \sigma_{Mn}]^\mathsf{T},
	\end{aligned}
\end{equation}
we can express the total received signals at the receiver as follows:

\begin{equation}
	\begin{aligned}
		\mathbf{y}_{n} &= [y_{1n}, \ldots, y_{mn}, \ldots, y_{Mn}]^\mathsf{T} \\
		&= \underbrace{\mathbf{f}_{r} \sqrt{\alpha_{dt}}s_{tn}}_{\text{direct link}} + \underbrace{\mathbf{f}_{b}  e \Big(g_r \sqrt{\alpha_{bt}}s_{tn} \Big)}_{\text{backscatter link}} + \bm{\sigma}_{n},
	\end{aligned}
\end{equation}
In this work, each backscatter frame $\mathbf{b}$ contains $I$ information bits, denoted by $\mathbf{b} = [b^{(1)},\ldots, b^{(i)}, \ldots, b^{(I)}]$. The channel is assumed to be invariant during one backscatter frame. Each information bit is encoded before backscattering with the modulo-2 operation as follows:

\begin{equation}
	e^{(i)} = e^{(i-1)} \oplus b^{(i)},
\end{equation}
where $e^{(i)}$ is the encoded bits in which $e^{(0)} = 1$~\cite{Guo2019Nocoherent, Huynh2022Defeating} and $\oplus$ denotes the addition modulo 2. As mentioned, $e^{(i)}$ will be backscattered to the receiver over $N$ transmitter symbols. Thus, the received signals at the receiver during the $i$-th backscatter symbol period can be expressed by:
\begin{equation}
	\mathbf{y}_{n}^{(i)} = \mathbf{f}_{r} \sqrt{\alpha_{dt}}s_{tn}^{(i)} + \mathbf{f}_{b}  e^{(i)} \Big(g_r \sqrt{\alpha_{bt}}s_{tn}^{(i)} \Big) + \bm{\sigma}_{n}^{(i)},
\end{equation}
where $n=1,2,\ldots, N$ and $i=1,2,\ldots, I$. Note that with the ambient backscatter communication technology, the transmitter's signals at the tag are unknown and random. As such, it is impossible to derive the close-form of the backscatter rate $R_\text{b}$~\cite{Huynh2018Survey},~\cite{Guo2019Nocoherent}. Instead, in Theorem~\ref{theo:maxrate}, we obtain the maximum achievable backscatter rate $R_\text{b}^\dagger$ to evaluate the system performance.

\begin{theorem}
	\label{theo:maxrate}
	The maximum achievable backscatter rate $R_\text{b}^\dagger$ of the backscatter tag can be numerically obtained as follows:
	\begin{equation}
		\begin{aligned}
			R_\text{b}^\dagger &= C(\theta_0) - \mathbb{E}_{\mathbf{y}_0}[C(\omega)] \\
			&= C(\theta_0) - \int_{\mathbf{y}_0}(\theta_0p(\mathbf{y}_0|e=0) +\theta_1 p(\mathbf{y}_0|e=1))C(\omega_0)d\mathbf{y}_0,
		\end{aligned}
	\end{equation}
	where $\theta_0$ is the prior probability when backscattering bits 0 (the prior probability when backscattering bits 1 is $\theta_1 = 1-\theta_0$), $\mathbf{y}_0$ is a realization of $\mathbf{y}$, $C$ is the binary entropy function, and $\omega_j = p(e=j|\mathbf{y}_0), j \in \{0,1\}$ is the posterior probability of backscattered bit $e$ given the received signal $\mathbf{y}_0$.
\end{theorem}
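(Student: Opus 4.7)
The plan is to recognize this statement as Shannon's standard channel-coding expression: the backscatter tag $\to$ receiver link is a memoryless channel whose input is the binary backscattered symbol $e \in \{0,1\}$ (with prior $\theta_0$, $\theta_1 = 1-\theta_0$) and whose output is the vector observation $\mathbf{y}$ (the $N$-sample received block during one backscatter symbol period, per the model of Section III-C). The maximum achievable rate for this input distribution is the mutual information $I(e;\mathbf{y})$, and the goal is simply to rewrite this mutual information into the displayed form.

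First I would invoke $R_\text{b}^\dagger = I(e;\mathbf{y}) = H(e) - H(e \mid \mathbf{y})$. Because $e$ is Bernoulli$(\theta_1)$, the unconditional entropy is the binary entropy $H(e) = C(\theta_0)$, which gives the first term. For the second term I would use the defining identity
\begin{equation*}
H(e \mid \mathbf{y}) \;=\; \int p(\mathbf{y}_0)\, H(e \mid \mathbf{y} = \mathbf{y}_0)\, d\mathbf{y}_0 \;=\; \mathbb{E}_{\mathbf{y}_0}\!\left[C(\omega)\right],
\end{equation*}
where, conditionally on $\mathbf{y}=\mathbf{y}_0$, the random variable $e$ is again Bernoulli with parameter $\omega_0 = p(e=0\mid \mathbf{y}_0)$, so its entropy is $C(\omega_0)$.

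Next I would expand the marginal $p(\mathbf{y}_0)$ using the law of total probability with respect to $e$,
\begin{equation*}
p(\mathbf{y}_0) \;=\; \theta_0\, p(\mathbf{y}_0 \mid e=0) \;+\; \theta_1\, p(\mathbf{y}_0 \mid e=1),
\end{equation*}
and substitute into the expectation. Combining the two pieces produces exactly the integral expression in the statement. A brief remark would justify that the likelihoods $p(\mathbf{y}_0\mid e=j)$ are well defined from the signal model: conditioned on $e=j$ and on the fading realizations $\mathbf{f}_r,\mathbf{f}_b,g_r$, the vector $\mathbf{y}_0$ is a linear function of the i.i.d.\ CSCG transmitter symbols $s_{tn}$ plus CSCG noise, hence has a tractable (Gaussian-mixture over fading) density, which is what the ``numerically obtained'' qualifier in the statement is meant to cover.

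The only genuine subtlety, and what I would flag as the main obstacle, is conceptual rather than computational: justifying that $I(e;\mathbf{y})$ is indeed ``the maximum achievable rate'' here, given that the transmitter's active symbols $s_{tn}$ are unknown and random at the tag and are not part of the codebook. The right viewpoint is to treat the composite channel $e \mapsto \mathbf{y}$ (with $s_{tn}$, the fading coefficients, and the noise all folded into the channel's internal randomness) as a memoryless channel with binary input; Shannon's theorem then delivers $I(e;\mathbf{y})$ as the largest rate achievable with the prescribed input distribution $\theta_0$, which is exactly $R_\text{b}^\dagger$ as defined. Once this channel abstraction is in place, the remainder of the proof is just the two-line entropy manipulation sketched above.
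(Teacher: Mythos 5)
Your proposal is correct and follows essentially the same route as the paper's own proof: both decompose $I(e;\mathbf{y})=H(e)-H(e\mid\mathbf{y})$, identify $H(e)=C(\theta_0)$, write the conditional entropy as the binary entropy $C(\omega_0)$ of the Bayes posterior, and expand the marginal $p(\mathbf{y}_0)$ by total probability to obtain the integral. The only cosmetic difference is that the paper first writes $R_\text{b}^\dagger=\mathbb{E}[I(e,\mathbf{y})]$ with an outer average over channel coefficients before noting $C(\theta_0)$ is unaffected by it, whereas you fold the fading into the channel's internal randomness from the outset; your added remarks on the well-definedness of the likelihoods and the channel abstraction go beyond what the paper states.
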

\begin{proof}
	The proof of Theorem~\ref{theo:maxrate} is provided in Appendix~\ref{appendix:maximumbackscatterrate}.
\end{proof}

\section{Decoding Backscattered Signals with Maximum Likelihood Detector}
\label{sec:decoding}
\subsection{Maximum Likelihood Detector}

Practically, the backscattered signals from the ambient backscatter tag are often weaker than the active signals sent from the transmitter. As such, it is very challenging to detect the backscattered signals. In the following, we propose an optimal maximum likelihood (ML) detector to help the receiver decode the backscattered signals.

When the tag backscatters bits ``0'' (i.e., $e^{(i)} = 0$), the received signals contain only the direct link signals. Differently, when backscattering bits ``1'' (i.e., $e^{(i)} = 1$), the received signals contain both the active signals from the transmitter and the backscattered signals from the tag. As such, the channel statistical covariance matrices corresponding to these cases can be expressed as~\cite{Guo2019Nocoherent},~\cite{Huynh2022Defeating}

\begin{equation}
	\begin{aligned}
	&\mathbf{K}_0 = \mathbf{h}_1 \mathbf{h}_1^\text{H} + \mathbf{I}_M,\\
	&\mathbf{K}_1 = (\mathbf{h}_1+\mathbf{h}_2)(\mathbf{h}_1+\mathbf{h}_2)^\text{H}+ \mathbf{I}_M,
	\end{aligned}
\end{equation}
where $\mathbf{h}_1 = \mathbf{f}_r \sqrt{\alpha_{dt}}$, $\mathbf{h}_2 =  g_r \mathbf{f}_b \sqrt{\alpha_{bt}}$, $\mathbf{I}_M$ is the $M \times M$ identity matrix, and $(*)^\text{H}$ denotes the conjugate transpose operator. Given received signals $\mathbf{y}_n^{(i)}$ and  backscatter symbol $e^{(i)}$, the conditional probability density functions (PDFs) are then obtained as follows:

\begin{equation}
	\label{eq:pdf}
	\begin{aligned}
		p(\mathbf{y}_n^{(i)}|e^{(i)} = 0) &= \frac{1}{\pi^M |K_0|} e^{-{\mathbf{y}_n^{(i)}}^\text{H} K_0^{-1}\mathbf{y}_n^{(i)}}, \\
		p(\mathbf{y}_n^{(i)}|e^{(i)} = 1) &= \frac{1}{\pi^M |K_1|} e^{-{\mathbf{y}_n^{(i)}}^\text{H} K_1^{-1}\mathbf{y}_n^{(i)}}.
	\end{aligned}
\end{equation}
From (\ref{eq:pdf}), the likelihood functions of $\mathbf{Y}^{(i)} = [\mathbf{y}_1^{(i)}, \ldots, \mathbf{y}_n^{(i)}, \ldots, \mathbf{y}_N^{(i)}]^\mathsf{T}$ can be calculated as follows~\cite{Guo2019Nocoherent, Huynh2022Defeating}:

\begin{equation}
	\label{eq:likelihoodfunctions}
	\begin{aligned}
		\mathcal{L}(\mathbf{Y}^{(i)}|e^{(i)} = 0) &= \prod_{n=1}^{N} \frac{1}{\pi^M |K_0|} e^{-{\mathbf{y}_n^{(i)}}^\text{H} K_0^{-1}\mathbf{y}_n^{(i)}},\\
		\mathcal{L}(\mathbf{Y}^{(i)}|e^{(i)} = 1) &= \prod_{n=1}^{N} \frac{1}{\pi^M |K_1|} e^{-{\mathbf{y}_n^{(i)}}^\text{H} K_1^{-1}\mathbf{y}_n^{(i)}}.
	\end{aligned}
\end{equation}
Then, we can derive the ML criterion (i.e., hypothesis) for backscattered symbol $e^{(i)}$ as follows:

\begin{equation}
	\hat{e}^{(i)} 	=	\left\{	\begin{array}{ll}
		0,	&	\mathcal{L}(\mathbf{Y}^{(i)}|e^{(i)} = 0) > \mathcal{L}(\mathbf{Y}^{(i)}|e^{(i)} = 1),\\
		1,	&	\mathcal{L}(\mathbf{Y}^{(i)}|e^{(i)} = 0) < \mathcal{L}(\mathbf{Y}^{(i)}|e^{(i)} = 1),
	\end{array}	\right.
\end{equation}
where $\hat{e}^{(i)}$ denotes the estimated bit. The ML criterion then can be rewritten as follows~\cite{Huynh2022Defeating}:

\begin{equation}
	\hat{e}^{(i)}	=	\left\{	\begin{array}{ll}
		0,	&	\sum_{n=1}^{N} {\mathbf{y}_n^{(i)}}^\text{H} (K_0^{-1}-K_1^{-1})\mathbf{y}_n^{(i)} < N\ln\frac{|K_1|}{|K_0|},\\
		1,	&	\sum_{n=1}^{N} {\mathbf{y}_n^{(i)}}^\text{H} (K_0^{-1}-K_1^{-1})\mathbf{y}_n^{(i)} > N\ln\frac{|K_1|}{|K_0|}.
	\end{array}	\right.
\end{equation}
Based on $\hat{e}^{(i)}$ we can derive the backscattered bit $e^{(i)}$ and then recover the original bit $b^{(i)}$.

\subsection{Successfully Decoded Information at the Receiver}
Let $\epsilon_d$ and $\epsilon_b$ denote the bit error ratio (BER) of the signals transmitted over the direct link and backscatter link, respectively. Given $\epsilon_d$ and $\epsilon_b$, the number of successfully decoded bits at the receiver, denoted as $\mathbf{\bar{T}}$, can be expressed as

\begin{equation}
	\mathbf{\bar{T}} = \mathbf{T} (1 - \eta) (1 - \epsilon_d) + \mathbf{T} \eta (1 - \epsilon_b),
\end{equation}
where $\mathbf{T}$ is the total number of bits transmitted from the transmitter, $\eta \in [0, 1]$ is the splitting ratio between backscatter bits and direct-transmission bits. For example, given $\mathbf{T} = 1,000$ bits, $\eta = 0.1$ splits 1,000 bits into 900 bits and 100 bits to be transmitted over direct link and backscatter link, respectively. Here, $\mathbf{\bar{T}}$ is calculated as a sum of the number of bits successfully transmitted over the direct link, i.e., $\mathbf{T} (1 - \eta) (1 - \epsilon_d)$, and the number of bits successfully transmitted over the backscatter link, i.e., $\mathbf{T} \eta (1 - \epsilon_b)$.


\section{Simulation Results}
\label{sec:evaluation}
\subsection{Parameter Setting}
In this section, we evaluate the performance of our proposed solution in various scenarios. Unless otherwise stated, we set $N=5$ and $I=100$. The number of antennas at the receiver is varied from 1 to 10. The Rayleigh fading follows the standard CSCG distribution with unit variance and zero mean~\cite{Zhang2019Constellation},~\cite{Huynh2022Defeating}. It is worth mentioning that $\alpha_\mathrm{dt}$ greatly depends on the environment factors such as antenna gain, transmit power, transmitter-to-receiver distance, and path loss. Therefore, we vary $\alpha_\mathrm{dt}$ from 1dB to 9dB in the simulations to evaluate the system performance in different scenarios. As the backscattered signals are usually weak, we set $\tilde{\alpha}_\mathrm{r}$ at -10dB. To obtain robust and reliable results, all simulations in this section are averaged over $10^6$ Monte Carlo runs.

\subsection{Performance Evaluation}
\begin{figure}[!]
	\centering
	\begin{subfigure}[b]{0.4\textwidth}
		\centering
		\includegraphics[scale=0.43]{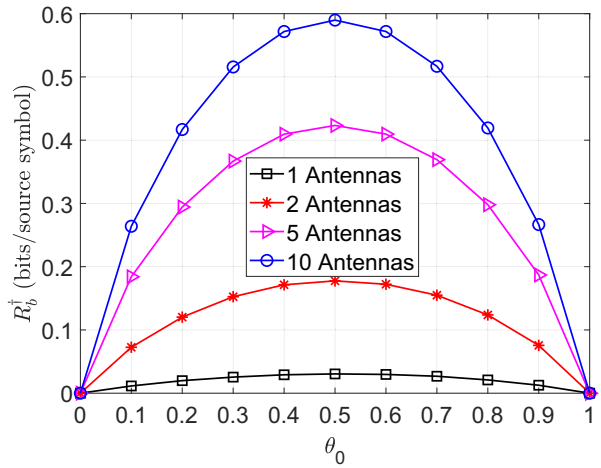}
		\caption{}
	\end{subfigure}%
	
	\begin{subfigure}[b]{0.4\textwidth}
		\centering
		\includegraphics[scale=0.43]{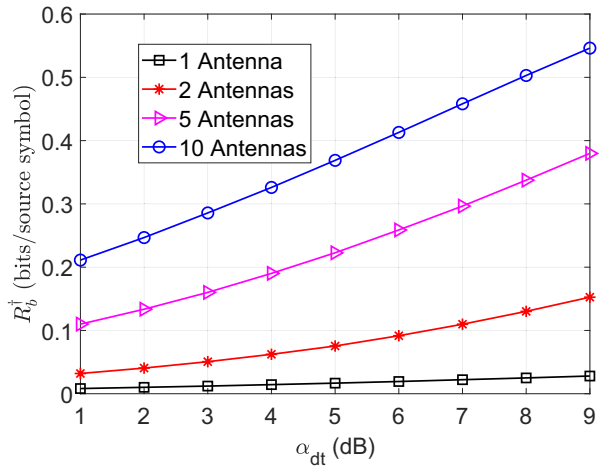}
		\caption{}
	\end{subfigure}%
	\caption{Maximum average achievable backscatter rate vs. (a) $\theta_0$ and (b) $\alpha_{dt}$.} 
	\label{fig.MaximumRate}
\end{figure}

We first vary the prior probability of backscattering bits ``$0$'' and observe the maximum achievable backscatter rate of the tag with different numbers of antennas at the receiver as illustrated in Fig.~\ref{fig.MaximumRate}(a). In particular, the maximum achievable backscatter rate is obtained based on Theorem~\ref{theo:maxrate} through $10^6$ Monte Carlo runs. It can be observed that the backscatter rate increases with the number of antennas at the receiver. The reason is that with multiple antennas, the receiver can leverage the antenna gain to eliminate the effects of the fading and the direct link interference. As a result, the backscattered signals received at the receiver can be enhanced. It is worth noting that when the probability of backscattering bits ``$0$'' equals 0.5, the backscatter rate is maximized. In Fig.~~\ref{fig.MaximumRate}(b), we vary $\alpha_\mathrm{dt}$ and observe $R_\text{b}^\dagger$. Clearly, when $\alpha_\mathrm{dt}$ increases, the achievable backscatter rate increases because the tag can backscatter strong signals to the receiver.

\begin{figure}[!]
	\centering
	\begin{subfigure}[b]{0.43\textwidth}
		\centering
		\includegraphics[width=1\linewidth]{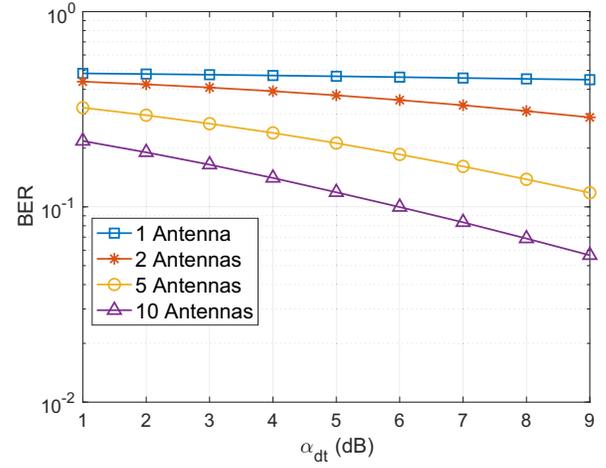}
		\caption{}
	\end{subfigure}%
	
	\begin{subfigure}[b]{0.43\textwidth}
		\centering
		\includegraphics[width=1\linewidth]{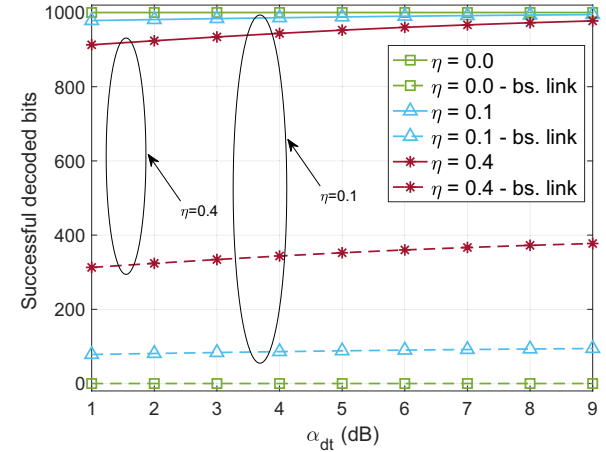}
		\caption{}
	\end{subfigure}%
	\caption{(a) BER $\epsilon_b$ vs. $\alpha_{dt}$ and (b) total number of successfully decoded bits $\mathbf{\bar{T}}$ vs. $\alpha_{dt}$.}
	\label{fig:BER}
\end{figure}

In Fig.~\ref{fig:BER}, we vary $\alpha_\mathrm{dt}$ from 1dB to 9dB and observe the BER of the system as well as the number of the successfully decoded bits. It is noted that we only consider the BER of the signals transmitted over the backscatter link, i.e., $\epsilon_b$, since the value of $\epsilon_d$ is approximate 0. It can be observed from Fig.~\ref{fig:BER}(a) that the BER performance increases with $\alpha_\mathrm{dt}$. The reason is that the backscattered signals received at the receiver can be improved when the tag backscatters strong signals from the transmitter. Moreover, the BER decreases when the number of antennas at the receiver increases. This is due to the fact that with more antennas, the receiver can reduce the effects of interference and fading by leveraging the antenna gain to strengthen the received backscattered signals.

In Fig.~\ref{fig:BER}(b), we consider the scenario with $M=10$ antennas at the receiver and vary the splitting ratio $\eta$ to evaluate the effectiveness of the backscatter tag. We consider that the total number of $\mathbf{T} = 1,000$ bits are transmitted over a period of time (e.g., a time frame). It is noted that the dashed lines express the number of successfully decoded bits of the backscatter link (bs. link) and the solid lines express the total number of bits successfully transmitted from the transmitter. The dashed lines, in other words, illustrate the amount of backscatter information that can be securely transmitted without being detected by the eavesdropper. It can be observed that with $\eta = 0$, i.e., no information is transmitted over the backscatter link, the successfully decoded signals achieve approximately 1,000 bits with the BER $\epsilon_d \approx 0$. When $\eta$ increases to 0.1 and 0.4, the gap between the respective solid lines and dash lines shrinks. The reason is that the more signals are transmitted over the backscatter link, the more signals are lost because the BER of the backscatter link is much higher than that of the direct link, i.e., $\epsilon_b \gg \epsilon_d$. The results express the trade-off between the number of bits that can be hidden from the eavesdropper and the number of successful received bits at the receiver. Given the above, our proposed solution is very promising in dealing with the eavesdropper for the following reasons. First, backscatter communications are easy to implement in practice. Second, due to a new way of communication, it introduces more difficulties to attackers in decoding the actual information.

\section{Conclusion}
\label{sec:conclusion}
In this paper, we have introduced a novel anti-eavesdropping solution that comes with mostly no extra power nor computing resource requirement by using the ambient backscatter communication technology. Specifically, a part of the original message will be sent over conventional active transmissions to attract the eavesdropper and drain its energy. The other part will be transmitted to the receiver by backscattering the transmitter's signals. In this way, it is impossible for the eavesdropper to obtain and derive the original message. The analytical and simulation results have demonstrated the effectiveness of our proposed solution in dealing with the eavesdropper.
\appendices
\section{The proof of Theorem~\ref{theo:maxrate}}
\label{appendix:maximumbackscatterrate}
In the following, we will mathematically present how to obtain the maximum achievable backscatter rate of the backscatter tag (similar to~\cite{Guo2019Nocoherent}). First, it can be observed that, $R_\text{b} = I(e;\mathbf{y})$ is the mutual information between the modulated information $e$ and the received signals $\mathbf{y}$ at the receiver. Hence, the maximum achievable backscatter rate $R_\text{b}^\dagger$ can be expressed as follows~\cite{Guo2019Nocoherent}:

\begin{equation}
	\label{eq:Rb}
	R_\text{b}^\dagger = \mathbb{E}[I(e, \mathbf{y})],
\end{equation}
where the mutual information $I(e, \mathbf{y})$ is formulated as follows:

\begin{equation}
	I(e, \mathbf{y}) = C(\theta_0) - \mathbb{E}_{\{\mathbf{y}_0\}}[H(e|\mathbf{y}_0)],
\end{equation}
where $H(e|\mathbf{y}_0)$ is the conditional entropy of $e$ given $\mathbf{y}_0$, and $C(\theta_0)$ denotes the binary entropy function which is calculated in (\ref{eq:Ctheta0}).

\begin{equation}
	\label{eq:Ctheta0}
	C(\theta_0) \triangleq -\theta_0 \log_2 \theta_0 -\theta_1 \log_2 \theta_1.
\end{equation}
It is worth mentioning that $C(\theta_0)$ is independent at all the channel coefficients. Thus, $R_\text{b}^\dagger$ can be rewritten as follows:

\begin{equation}
	R_\text{b}^\dagger = \mathbb{E}[I(e, \mathbf{y})] = C(\theta_0) - \mathbb{E}_{\{\mathbf{y}_0\}}[H(e|\mathbf{y}_0)].
\end{equation}
The posterior probability of $e$ when receiving $\mathbf{y}_0$ is formulated as follows:

\begin{equation}
	p(e=j|\mathbf{y}_0) = \frac{\theta_jp(\mathbf{y}|e=j)}{\theta_0p(\mathbf{y}_0|e=0) +\theta_1 p(\mathbf{y}_0|e=1)},
\end{equation}
with $j \in \{0,1\}$. We then define $\omega_j = p(e=j|\mathbf{y}_0)$ with $j \in \{0,1\}$. Given the above, we can derive $H(e|\mathbf{y}_0)$ as follows:

\begin{equation}
	H(e|\mathbf{y}_0) = -\sum_{j=0}^{1}\omega_j\log_2\omega_j = C(\omega_0).
\end{equation}
Finally, the maximum achievable backscatter rate is obtained as follows:

\begin{equation}
	\begin{aligned}
		R_\text{b}^\dagger &= C(\theta_0) - \mathbb{E}_{\{\mathbf{y}_0\}}[C(\omega)]\\
		& = C(\theta_0) - \int_{\mathbf{y}_0}(\theta_0p(\mathbf{y}_0|e=0) +\theta_1 p(\mathbf{y}_0|e=1))C(\omega_0)d\mathbf{y}_0.
	\end{aligned}
\end{equation}

\end{document}